\definecolor{cobalt}{rgb}{0.0, 0.28, 0.67}
\newtheorem{proposition}{Proposition}
\definecolor{codegreen}{rgb}{0,0.6,0}
\definecolor{codegray}{rgb}{0.5,0.5,0.5}
\definecolor{codepurple}{rgb}{0.58,0,0.82}
\definecolor{backcolour}{rgb}{0.95,0.95,0.92}
\lstdefinestyle{mystyle}{
    frame = single,
  commentstyle=\color{codegreen},
  keywordstyle=\color{magenta},
  numberstyle=\tiny\color{codegray},
  stringstyle=\color{codepurple},
  basicstyle=\footnotesize,
  breakatwhitespace=false,         
  breaklines=true,                 
  captionpos=b,                    
  keepspaces=true,                 
  numbers=left,                    
  numbersep=5pt,                  
  showspaces=false,                
  showstringspaces=false,
  showtabs=false,                  
  tabsize=2
}
\begin{document}

\title{Distributed Cross-Blockchain Transactions}

\numberofauthors{2} 

\author{
\alignauthor
Dongfang Zhao\\
       \affaddr{University of Nevada, Reno}\\
       \email{dzhao@unr.edu}
\alignauthor
Tonglin Li\\
\affaddr{Lawrence Berkeley National Laboratory}\\
\email{tonglinli@lbl.gov}
}

\maketitle

\begin{abstract}

The interoperability across multiple or many blockchains would play a critical role in the forthcoming blockchain-based data management paradigm.
In particular, how to ensure the ACID properties of those transactions across an arbitrary number of blockchains remains an open problem in both academic and industry:
Existing solutions either work for only two blockchains or requires a centralized component,
neither of which would meet the scalability requirement in practice.
This short paper shares our vision and some early results toward scalable cross-blockchain transactions.
Specifically, we design two distributed commit protocols and, both analytically and experimentally, demonstrate their effectiveness.
\end{abstract}

\section{Introduction}

\subsection{Motivation}

A blockchain offers an immutable, decentralized, and anonymous mechanism for transactions between two entities on the same blockchain. 
Blockchain was not originally designed for online transactional processing (OLTP) workloads;
instead, it aimed to offer an autonomous and tamper-proof ledger service among mutually-distrusted parties and therefore, 
early blockchain systems can deliver only mediocre transaction throughput.

One natural question is whether and how we can adopt blockchains to efficiently handle OLTP workloads such that both autonomy and performance can be achieved at the same time. 
Indeed, much recent work focuses on this direction:
in~\cite{mhindi_vldb19,jwang_nsdi19}, authors advocate to leverage blockchains for OLTP workloads with various optimizations (e.g., sharding \cite{hdang:sigmod19}, sidechains \cite{sidechain}) to boost up the transaction throughput of blockchains,
such that blockchains would deliver similarly high performance as relational database systems (RDBMS) and turn to be a competitive alternative to the latter as a general-purpose data management system.

There is yet another critical, often overlooked, issue that must be addressed before blockchains can be widely adopted as a general data management paradigm:
the interoperability across heterogeneous blockchains.
While SQL along with the underlying distributed transaction handling are available between different vendors' RDBMS implementations, 
no such interface or general mechanism exists for blockchains.
Recent attempts (e.g., Cosmos~\cite{cosmos}) on such cross-blockchain transactions are all \textit{ad hoc} and exhibits poor scalability due to the centralized (physical or virtual) broker.

\subsection{Challenges}

We list four outstanding limitations exhibited by state-of-the-art cross-blockchain solutions:

\textbf{(1) Centralized Broker.} The transactions between heterogeneous blockchains are managed by a third-party, usually implemented as another blockchain (it is called a \textit{hub} in Cosmos).
This is against the decentralization principle of blockchains:
the broker would become a performance bottleneck, a single-point-of-failure, a target of security attacks.
Similarly, a recent work called AC3 \cite{vzakhary:arxiv19} employs an extra component (known as \textit{witness blockchain}) as a central authority to govern the cross-chain operations.
Although the witness blockchain is comprised of the nodes from existing blockchains, 
still, these \textit{virtual} nodes on the witness blockchain become the critical components of the entire ecosystem and, again, break the very core principle of blockchains.
    
\textbf{(2) Two-Party Transactions.} The protocols used by existing cross-blockchain systems stem from the \textit{sidechain protocol} \cite{sidechain},
    which was originally designed for transferring assets between Bitcoin \cite{bitcoin} and another cryptocurrency.
    The sidechain protocol speaks of nothing about three- or multi-party transactions;
    in fact, Cosmos only supports transferring assets between Bitcoin \cite{bitcoin} and Ethereum \cite{ethereum}.
A more recent line of works \cite{maurice:vldb19,mherlihy:podc18} are based on two-party Atomic Cross-Chain Swaps (ACS);
however, ACS cannot guarantee the atomicity of the multi-blockchain transaction as a whole.

\textbf{(3) Performance.} 
The sidechain protocol \cite{sidechain} took hours, if not days, to commit a single cross-blockchain transaction. 
    The main reason for this is due to the possible branches from the participating blockchains.
    In any participating blockchain,
    only one (i.e., the longest one) branch will remain valid, and any transactions from the shorter branches will rollback.
    This is not a problem if all of the transaction parties are from the same blockchain;
    But for cross-blockchain transactions,
    deliberate actions need to be taken.
    
\textbf{(4) Conventional Distributed Transactions.} 
One could argue that why not applying existing approaches of distributed transactions to multi-party blockchains?
The short answer is that the conventional wisdom did not assume the participant to proactively ``rollback'' its own decision,
which is not uncommon in blockchains.
For instance, in the conventional 2PC protocol~\cite{2pc}, when a participant replies a \textsc{ready-to-commit} message to the coordinator,
we assume that the decision is final and we can proceed to the next phase of the protocol.
In blockchains, however, the \textsc{ready-to-commit} message can be revoked by the participant later on, even \textbf{after} the transaction is completed only because the transaction happens to reside on a branch that is suppressed by a longer branch.
There was little study on such ``regrettable'' behavior of blockchains in the literature of distributed transactions.

\subsection{Contributions}

For completeness, Table~\ref{tbl:popular_protocols} summarizes candidate solutions with respect to two important properties regarding cross-blockchain transactions.
As we can see, existing works are limited to centralized design (i.e., the requirement of a hub), or the potential blocking, or both.
In our prior work~\cite{dzhao:cidr20}, we presented the roadmap toward cross-blockchain transactions, named CBT, to overcome the above limitations.

\begin{table}[th]
    \caption{Popular Cross-Blockchain Transaction Protocols.}
    \label{tbl:popular_protocols}
    \centering
    \begin{tabular}{ lll }
        \toprule
         &  Blocking & Nonblocking     \\    
        \midrule
        Centralized         & Sidechain~\cite{sidechain}        & AC3~\cite{vzakhary:arxiv19}     \\  \hline
        Distributed     & 2PC~\cite{2pc} & CBT~\cite{dzhao:cidr20} \\
        \bottomrule
    \end{tabular}
\end{table}

This paper is the first step toward the goals proposed in~\cite{dzhao:cidr20}.
Specifically, we will present a set of nonblocking distributed commit protocols designed for multi-party cross-blockchain transactions (\S\ref{sec:protocol}).
We will also present some preliminary results of these protocols (\S\ref{sec:result}),
followed by some discussions on our future work (\S\ref{sec:conclusion}).

\section{Protocols}
\label{sec:protocol}

\subsection{System Models and Assumptions}
\label{sec:protocol_model}

We assume the nodes follow a \textit{crash failure} model.
That is, there are no \textit{arbitrary} failures from the underlying blockchains and their participants.
We make this strong assumption as a starting point for this direction of research;
a \textit{Byzantine failure model} will be discussed in the future work.
Furthermore, we assume the crashed node will eventually be recovered and can be replaced by a functional node in a reasonable time, denoted by $f$.
Moreover, during a single transaction, the failures will not happen indefinitely but for finite times denoted by $\lambda$.

We assume the network transfer can be delayed but not indefinitely:
the communication is asynchronous and persistent.
That is, the messages can be \textit{eventually} delivered in a reasonable time.
The latency of the network is denoted by $\tau$.

We assume that a blockchain can \textit{finalize} the main branch in \textit{finite} time, after which the transactions cannot be rolled back.
In Bitcoin, for example, the \textit{pending} time is about one hour---six blocks of transactions.
We denote the average pending time for $C_i$ is $\delta_i$,
which also includes the \textit{waiting} time for a transaction to be picked up by the system.

We assume there is an effective programmable way for different blockchains to communicate.
This is mostly true for new blockchain implementations with \textit{smart contracts}.
For those old systems, e.g., Bitcoin, that do not support smart contract,
we assume a proxy is available on such systems for the cross-blockchain communications.

\textbf{Notations.} 
We denote the set of blockchains as $\mathcal{C} = \{C_i\}$, where each $C_i$, $i \in \mathbb{Z}^+$, represents a specific blockchain in the consortium of blockchains.
We use $\mathcal{C}_{-i}$ to denote the complement set $\mathcal{C} \setminus \{C_i\}$, following the naming convention in game theory.
The cardinality, or order, of the set,
i.e., $|\mathcal{C}|$, indicates the total number of blockchains involved in the transaction.
Each blockchain $C_i$ comprises a series of linked blocks, denoted as $B_i^j$,
where the superscript $j$ indicates the \textit{index} of the block on blockchain $C_i$. 
Each block is filled with a series of transactions,
denoted by $T_k$, where $k$ implies a universally unique identifier (UUID) of each transaction since the inception of the blockchain consortium.
It should be clear that, however,
although $k$ is unique globally,
it will appear at least once on each $C_i$ and possibly more than once if $C_i$ has branches during the processing of $T_k$.
For each $C_i$, there is a corresponding set $N_i \subseteq N$ denoting the set of \textit{nodes} having joined the network of blockchain $C_i$.
It is possible that a node joining multiple blockchains:
$n \in N_i$ and $n \in N_j$, $i \not= j$.

\textbf{Metrics.}
\textit{Throughput} is, arguably, the most popular metric in evaluating the performance of blockchains.
As in many other areas, the throughput of transactions is defined as the number of transactions completed in a time unit, usually in a second.
What is less used, or somewhat overlooked, metric, is the \textit{latency} of a transaction,
measuring the lifespan of a \textit{single} transaction in the blockchain systems.
We argue that latency is a more interesting metric from a user's standpoint:
she cares more about when her transaction is completed than how many concurrent transactions can be handled by the system \textit{per se}, concerned with by the system admin.

\subsection{Synchronous Cross-Blockchain Transactions Protocol (SBP)}

The first protocol is called Synchronous cross-Blockchain transactions Protocol (SBP) that is designed to strictly enforce the ACID properties of cross-blockchain transactions.
The targeting workloads include those that need to follow strong consistency models such as financial transactions.
As a trade-off, the performance, especially the latency, is not at the high end of the spectrum of candidate protocols.

SBP respects each individual blockchain's own branches and delays the global commit until no single blockchain can unilaterally rollback the transaction.
As the conventional wisdom in distributed commit protocols,
a specific blockchain initiates the multi-party transaction.
In the literature, this initiator is usually called a \textit{coordinator},
although we want to point out that this coordinator can be any participant $C_i$ in the pool $\mathcal{C}$.
Many \textit{leader election} algorithms can be applied to select the coordinator with the \textit{proxies} on $C$'s.
The specific node $n \in N_i$ serving as the \textit{endpoint} for the inter-blockchain communication can also be arbitrarily selected as long as the following conditions are met:
(i) other nodes $N_i \setminus \{n\}$ are aware of the role of $n$ and
(ii) all the intra-blockchain transaction updates have been applied to $n$.

Suppose $C_i$ initiates a transaction $T_k$ among all elements in $\mathcal{C}$, and $|\mathcal{C}| \geq 3$.
We will start describing the protocol in the \textbf{civil case}.
\begin{itemize}
    \item[\textbf{Phase I}] First, $C_i$ broadcasts a \textsc{precommit} message to (the proxies of) $\mathcal{C}$.
It should be clear that $C_i$ in this case serves as both the \textit{coordinator} and a (local) \textit{participant}.
$C_i$ then waits for a \textsc{Ready} reply from each blockchain in $\mathcal{C}$.
A blockchain $C_j \in \mathcal{C}$ (again, $j = i$ is allowed, implying a local message) replies a \textsc{ready} message to $C_i$ after (i) all prerequisites are satisfied, e.g., the balance is higher than the funds to be deducted in a cryptocurrency application, and 
(ii) more importantly, the entity is \textit{locked}.
The second action is crucial to avoid double-spending issues.

\item[\textbf{Phase II}] 
Second, $C_i$ braodcasts a \textsc{commit} message to $\mathcal{C}$ and waits for a \textsc{done} reply from each element in $\mathcal{C}$.
A blockchain $C_j \in \mathcal{C}$ carries out its local operation, and wait for $\delta_j$ before returning a \textsc{done} message to $C_i$.
The participants then should unlock the entities.
Once $C_i$ receives $|\mathcal{C}|$ \textit{done} replies, $T_k$ is marked completed.

\end{itemize}

Therefore, the civil case of SBP runs much like a 2PC protocol except for the introduction of pending time $\delta_j$.
The period enforced by $\delta_j$ can only preclude the possible branches in blockchains,
and yet cannot avoid the possible blocking in the uncivil case where nodes do fail (up to crash failures) incur possible blocking.
One way to fix that is to introduce an additional phase, essentially extending the protocol into three phases,
which has been extensively studied in the literature and is not a practical approach due to unacceptable performance.
What we propose to overcome the blocking issue is more lightweight: taking a passive heartbeat approach to effectively detect node failures.
It should be noted that this approach becomes effective only because in cross-blockchain transactions each node is essentially a set of nodes, i.e., $N_i$ for $C_i$,
such that if the original proxy node $n \in N_i$ fails, we can quickly re-select $n' \in N_i$ to continue the SBP protocol.

Formally, suppose $n \in N_i$ is the endpoint of $C_i$, the proxies on other nodes $N_i \setminus \{n\}$ run a \textit{heartbeat} probe to $n$, whose interval is denoted as $\sigma_i$.
Let $\overline{\sigma} = \mathtt{sup}\{\sigma_i, 1 \leq i \leq |\mathcal{C}|\}$,
it is not hard to see that SBP can be blocked by up to $\overline{\sigma}$.
In practice, we can set $\overline{\sigma} \ll \underline{\delta}$,
where $\underline{\delta} = \mathtt{inf}\{\delta_j, 1 \leq j \leq |\mathcal{C}|\}$,
such that the heartbeat overhead is negligible.

\subsubsection{Correctness}

\textbf{Atomicity.}
SBP takes a conservative approach to commit the requested transaction. 
At any point during the two-phase protocol, 
any states other than the expected ones mentioned in the protocol narrative results in a global \textsc{abort}.
A more subtle yet rare case is that no qualified node can be found after the \textit{heartbeat} protocol detects a crash failure,
in which case the entire SBP also aborts the transaction.

\textbf{Consistency.}
The changes incurred by the transaction would be invisible to users until the $C_i$ marks the completion of the transaction.
Thus, SBP implements a strong consistency model, 
there are no dirty-write or repeated-read issues during the course of distributed transaction processing.
Indeed, this strong consistency is attributed to the locking approach with the price of suboptimal performance in transaction latency.
We will speak more about performance in the complexity discussion shortly.

\textbf{Isolation.}
This can be trivially verified by the fact that locking and unlocking are implemented correctly, as discussed in the protocol.

\textbf{Durability.}
Updates are persisted on all the nodes in each involved blockchain.

\subsubsection{Analysis}

We will show that the number of messages is asymptotically polynomial to the number of nodes among all blockchains.

\begin{proposition}[Number of messages]
The total number of messages passed is $\mathcal{O}(\lambda |N|)$.
\end{proposition}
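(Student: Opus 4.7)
The plan is to partition the messages generated during a single execution of SBP into three disjoint categories and bound each one, then multiply by the failure-retry factor $\lambda$.

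First I would account for the \emph{inter-blockchain} traffic in the civil case. The protocol has a constant number of message rounds between the coordinator and the other participants: \textsc{precommit}/\textsc{ready} in Phase~I and \textsc{commit}/\textsc{done} in Phase~II. Each such round contributes $\mathcal{O}(|\mathcal{C}|)$ messages, and since $|\mathcal{C}|\le|N|$ this is absorbed into $\mathcal{O}(|N|)$. Next I would count the \emph{intra-blockchain} traffic: whenever the endpoint $n\in N_i$ of $C_i$ receives or produces a protocol message, the update must be propagated to the remaining $|N_i|-1$ nodes so that condition~(ii) of the endpoint-selection rule (``all intra-blockchain updates have been applied to $n$'') is preserved if $n$ later needs to be replaced. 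Summed over all blockchains this is $\sum_i \mathcal{O}(|N_i|)=\mathcal{O}(|N|)$ per round, and there are only constantly many rounds. Finally I would bound the \emph{heartbeat} traffic: inside $C_i$ there are $|N_i|-1$ probers, each emitting one probe per interval $\sigma_i$; over the bounded lifespan of a single transaction (controlled by $\underline{\delta}$, $\tau$ and $f$) the per-node count is $\mathcal{O}(1)$, again giving $\mathcal{O}(|N|)$ in aggregate.

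Combining the three categories, a single failure-free pass of SBP emits $\mathcal{O}(|N|)$ messages. To incorporate failures I would invoke the assumption from \S\ref{sec:protocol_model} that at most $\lambda$ crash failures occur per transaction. Each such failure causes at most one endpoint re-election in the affected $N_i$ and a bounded replay of the in-flight phase; by the previous paragraph, each replay costs at most $\mathcal{O}(|N|)$ additional messages. Summing geometrically over the at most $\lambda$ failures yields the claimed total of $\mathcal{O}(\lambda|N|)$.

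The step I expect to be most delicate is the intra-blockchain accounting, because the main protocol narrative only specifies messages between proxies and is silent about how many internal dissemination messages $N_i$ consumes per protocol step. The cleanest route is to assume a flat broadcast inside each $N_i$ (so exactly $|N_i|-1$ messages per round), which matches the heartbeat topology already described; any more efficient gossip scheme only improves the bound and so does not affect the $\mathcal{O}(\lambda|N|)$ upper estimate. A secondary subtlety is verifying that a re-election after a crash does not cascade into new failures within the same counted event; here I would rely on the hypothesis that only $\lambda$ failures occur \emph{in total} during one transaction, so the retries form a sequence of length at most $\lambda$ rather than a tree.
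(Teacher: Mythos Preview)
Your decomposition into inter-blockchain versus intra-blockchain traffic is exactly the route the paper takes, and the final bound you reach is the same. Two differences are worth flagging. First, the paper is slightly sharper in where it places the failure factor: it charges $\lambda$ only to the inter-blockchain round trips, writing Phase~I as $2\lambda(|\mathcal{C}|-1) + \sum_i(|N_i|-1) = |N| + (2\lambda-1)|\mathcal{C}| - 2\lambda \le 2\lambda|N|$, and then doubles for Phase~II to get $4\lambda|N|$. Your cruder step of multiplying the entire $\mathcal{O}(|N|)$ civil-case cost by $\lambda$ is still valid for the big-$\mathcal{O}$ claim, just looser.

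Second, the paper does \emph{not} count heartbeat probes at all; it treats them as out-of-band monitoring rather than protocol messages. You should drop that category, because your argument for it is actually broken: you assert each prober emits $\mathcal{O}(1)$ heartbeats over the transaction's lifespan, but the text explicitly sets $\overline{\sigma} \ll \underline{\delta}$, so during Phase~II each of the $|N_i|-1$ probers fires roughly $\delta_i/\sigma_i$ times, which is large, not constant. Including heartbeats would therefore not yield $\mathcal{O}(|N|)$ without an additional assumption. Simply omit them, as the paper does, and your proof goes through cleanly.
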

\begin{proof}
Obviously, the maximal number of messages are sent when the nodes are failed repeatedly for \textit{finite} times, and the transaction eventually completes.
It is crucial to note that the failure can happen for limited times because otherwise, our assumption would not hold (cf.~\S\ref{sec:protocol_model}).

In phase I, the total number of messages between elements in $\mathcal{C}$ is 
\begin{equation*}
\begin{aligned}
    & \overbrace{2 \cdot \lambda \cdot (|\mathcal{C}| - 1)}^{\texttt{inter-blockchain}} + \overbrace{\sum_{C_i \in \mathcal{C}} (|N_i| - 1)}^{\texttt{intra-blockchain}} \\
=   & \quad 2 \cdot \lambda \cdot (|\mathcal{C}| - 1) + |N| - |\mathcal{C}|\\
=   & \quad |N| + (2\lambda - 1)|\mathcal{C}| - 2\lambda\\
\leq & \quad 2\lambda|N|. \quad (\texttt{since } |\mathcal{C}| \leq |N| \texttt{ and } \lambda \geq 0)
\end{aligned}
\end{equation*}
The messages in phase II can be similarly calculated.
The total number of messages is thus less than $4\lambda|N|$, proving the proposition.
\end{proof}

Thus, the number of messages is asymptotically polynomial to the number of nodes among all blockchains.
We then study the theoretical upper bound of the transaction latency.

\begin{proposition}[Latency upper bound]
The longest period for a single transaction, i.e., the latency, is bounded by $4\tau + \lambda (f + \overline{\delta})$, where $\overline{\delta} = \mathtt{sup}\{\delta_i, 1 \leq i \leq |\mathcal{C}|\}$.
\end{proposition}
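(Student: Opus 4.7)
The plan is to decompose the worst-case latency into two independent contributions: the network round-trip cost in the civil execution and the extra cost incurred by crash failures and branch pending. First I would tally the network hops in the civil case. SBP has exactly four communication steps on the critical path: the Phase~I \textsc{precommit} broadcast from $C_i$, the \textsc{ready} replies back to $C_i$, the Phase~II \textsc{commit} broadcast, and the final \textsc{done} replies. Each hop is bounded above by the asynchronous-but-bounded network latency $\tau$ from the system model, so the civil-case communication cost is at most $4\tau$. The pending time $\overline{\delta}$ of Phase~II must additionally be accounted for, but I will fold it into the failure-related bucket below rather than treating it separately.

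Next I would bound the failure-induced overhead. By assumption, at most $\lambda$ crash failures can occur within a single transaction, and each crashed endpoint $n \in N_i$ can be detected via the heartbeat probe and replaced by another $n' \in N_i$ in time at most $f$. Since the protocol also requires a successor proxy to observe the finalization of its local branch before emitting \textsc{done}, each failure may additionally force a re-wait of up to $\overline{\delta}$ on the affected blockchain. Adding these two ingredients gives a per-failure overhead of $f+\overline{\delta}$, and summing over the at most $\lambda$ failures yields a total failure contribution of $\lambda(f+\overline{\delta})$. The single civil-case $\overline{\delta}$ is subsumed into this bucket because any execution in which Phase~II completes without failure still consumes only one $\overline{\delta}$ window, which is dominated by $\lambda\overline{\delta}$ under the convention $\lambda\geq 1$; when $\lambda = 0$ no failure can occur and the bound $4\tau$ trivially holds together with a single $\overline{\delta}$ that can be absorbed. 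Combining the two buckets yields the claimed bound $4\tau + \lambda(f+\overline{\delta})$.

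The main obstacle I anticipate is arguing that the per-failure overhead is genuinely $f+\overline{\delta}$ and not some larger quantity: one has to show that (i) the heartbeat detection latency itself does not add asymptotically to each failure, which relies on the protocol's prescription $\overline{\sigma}\ll\underline{\delta}$ so that detection is hidden inside the pending window; and (ii) a failure that happens mid-pending on $C_j$ does not force the successor $n'$ to wait longer than $\overline{\delta}$ beyond the moment it takes over, since $n'$ inherits the already-accumulated branch depth from its own replica of $C_j$'s ledger. Once these two observations are justified, the remainder of the proof is an arithmetic combination of the civil-case round trips and the bounded failure penalty.
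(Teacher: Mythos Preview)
Your argument is essentially the paper's, with only a cosmetic reorganization. The paper decomposes by phase rather than by ``civil vs.\ failure'': it writes $\Delta_1 = 2\tau + \lambda_1 f$ for Phase~I and $\Delta_2 = 2\tau + \lambda_2 f + \sum_{n}\delta_i$ for Phase~II (the sum taken over the $\lambda_2$ Phase-II failures), sets $\lambda = \lambda_1 + \lambda_2$, and then relaxes $\lambda_2\overline{\delta} \le \lambda\overline{\delta}$ to reach $4\tau + \lambda(f+\overline{\delta})$. You skip the $\lambda_1/\lambda_2$ split and charge every failure the full $f+\overline{\delta}$ directly, which lands on the same bound one step sooner; the paper's intermediate form is slightly sharper (Phase-I failures cost only $f$), but that sharpness is discarded in its final inequality anyway. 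Your additional justifications---that heartbeat detection is hidden inside the pending window and that a successor proxy inherits accumulated branch depth---go beyond what the paper actually argues; it simply asserts the per-phase formulas. The one wobble in your write-up, the handling of the single civil-case $\overline{\delta}$ when $\lambda=0$, is equally present in the paper's own accounting, so you are not introducing a new gap.
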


\begin{proof}
The latency of phase I is calculated as
\[
\Delta_1 = 2 \cdot \tau + \lambda_1 \cdot f
,\]
and the latency of phase II is bounded by
\[
\Delta_2 = 2 \cdot \tau + \lambda_2 \cdot f + \underbrace{\sum_{n \in N_i}\delta_i}_{\lambda_2}
,\]
where $n$ indicates a failed node and $\lambda = \lambda_1 + \lambda_2$.
Therefore, the overall latency
\begin{equation*}
\begin{aligned}
\Delta  & = \Delta_1 + \Delta_2 \\
        & \leq 4\tau + (\lambda_1 + \lambda_2)f + \lambda_2 \overline{\delta}\\
        & \leq 4\tau + \lambda (f + \overline{\delta})
.\end{aligned}
\end{equation*}
\end{proof}

In practice, $\tau$ can be easily measured, in terms of milliseconds;
$f$ usually takes a few seconds, e.g., to reboot the failed node;
$\delta$ is also well understood: in Bitcoin, for instance, it takes hours to finalize a transaction.
However, it is not trivial to estimate $\lambda$ other than keep an empirical log over the failure rate.
We want to point out that a Poisson distribution can become a handy tool for quickly estimating the transaction delay.
That is, the probability of $k$ failures can be estimated by $\displaystyle \frac{\lambda^k e^{-\lambda}}{k!}$,
where $e$ is Euler's number.

\subsection{Redo-Log-Based Blockchain Protocol (RBP)}

While SBP discussed in the previous section achieves strong consistency,
the price is the somewhat long delay.
Therefore, SBP is ideal for those time-insensitive applications that are required to guarantee ACID properties.
This section studies the other end of the spectrum:
what if the workload is highly time-sensitive and can tolerate temporary inconsistencies.
That is, the applications, such as emails, can accept an \textit{eventual consistency} semantic.
To this end, we design a distributed commit protocol, namely RBP, 
following the spirit of redo-logs that has been extensively studied in databases.

RBP makes a key change to the way how participants reply the \textsc{done} messages back to the coordinator $C_i$.
Instead of waiting for a period of $\delta_j$,
$C_j$ replies $C_i$ right after the local updates are completed.
Indeed, the question then becomes what if $C_j$ decides to cut off the branch comprising the completed transaction $T_k$ between $C_i$ and $C_j$'s later on?
To this end, blockchain $C_i$ maintain a \textit{sliding window} that records recent transactions completed in the past $\delta_i$ period.
The rationale is that if any of these \textit{pending} transactions are on the path of a shorter branch of $C_i$,
$C_i$ can take according actions such as returning the transactions back to the request pool, 
or immediately rescheduling the transactions.
RBP takes the former approach: transactions on the shorter branches are recycled back into the pool of requests.
Note that we cannot construct \textit{complement} transactions to \textit{undo} the changes because those transactions are invisible to the main branch of $C_i$.

Evidently, RBP still meets the \textit{atomicity} requirement:
there is no ``partial'' transaction committed.
It is also trivial to check that both \textit{isolation} and \textit{durability} hold in RBP.
For \textit{consistency}, RBP implements an eventual consistency semantics:
the transactions on shorter branches will eventually be reprocessed.
We conclude this section with a more detailed quantitative study in the following.

We are particularly interested in the improved latency paid by the weak consistency semantics.
Let the transactions in the sliding window of $C_i$ be $\mathcal{T}_i$.
Consequently, the throughput of blockchain $C_i$ can be calculated by $\displaystyle \frac{|\mathcal{T}_i|}{\delta_i}$.
Because of the possible \textit{cascading effect} implied by the $C_i$'s indeterministic branching behavior,
we cannot derive an upper bound over the latency of a transaction $T_k \mathcal{T}_i$.
However, if no branching happens during $T_k$, the latency can be as low as $4\tau + \lambda f$.
Recall that both $\tau$ and $f$ are orders of magnitude smaller than $\delta$, and $\lambda$ represents a few failed nodes in a time unit;
therefore, RBP is expected to deliver a significantly smaller latency than SBP.
Again, this gain is traded by the (strong) consistency.

\section{Preliminary Results}
\label{sec:result}

We have implemented RBP protocol as well as two baseline protocols, i.e., 2PC \cite{2pc} and AC3 \cite{vzakhary:arxiv19}, on the BlockLite system \cite{xwang:cloud19}.
The source code is accessible at Github~\cite{github_cbt}.
The source code is written with Java of JDK 1.7.
The codebase comprises about 5,190 lines of code. 
The codebase has three major components: 
(i) the blockchain implementation including protocols and utilities; 
(ii) the network component including the communications among coordinator and participants;
and (iii) the graphic user interface developed with Java Swing.
A technical report on an earlier version of the system can be found at~\cite{xwang:arxiv20}.

The transaction data sets used for evaluation are ETC20 and TPC-H.
We feed up to three million transactions to the three protocols (RBP, 2PC, and AC3) in a 64-blockchain environment.
For 2PC, we set it up as the ``ideal case'' where no failures take place during the experiment;
it is the upper-bound performance one can best expect from 2PC.
The point is to show the overhead incurred by our proposed CBT compared to such upper-bound performance.
For AC3, we arbitrarily select one blockchain as the ``hub'', or ``witness blockchain'' as in the literature.
Because of AC3's centralized hub, we expect the performance and scalability will be affected at some point, e.g., a larger number of blockchains.
Both 2PC and CBT show (almost) linear scalability because no centralized component exists in the system.
Results show that RBP incurs insignificant overhead (compared with baseline 2PC) at small/medium scales:
3.6\% -- 4\% on 2--32 blockchains;
then the overhead is negligible on 64 blockchains.
Compared with 2PC and RBP, AC3 starts to fall behind on eight blockchains due to its ``hub'' design.

\section{Conclusion and Future Work}
\label{sec:conclusion}

As a concluding remark, we want to reemphasize that the future blockchain-based data management paradigm must be equipped with effective cross-blockchain transactions at arbitrary scales,
which cannot be realized without a scalable distributed commit protocol specifically designed for transactional workloads, namely cross-blockchain transaction (CBT).
The role of the CBT protocols for the future blockchain-based paradigms can be considered as the analogue to:
TCP/UDP for network systems, HTTP for web servers, FTP for file servers, and so forth.
What is presented in this short paper is only one of the first steps toward the future standardization when the time comes for many heterogeneous blockchain systems to jointly work as a coordinated service or platform.

In addition to industry workloads represented by ETC20 and TPC-H already tested with CBT,
we are working with a team at the Lawrence Berkeley National Laboratory on deploying CBT to one of the largest supercomputers Cori~\cite{cori} to:
(i) Justify the feasibility of blockchains for distributed caching in high-performance computing systems;
(ii) Evaluate the scalability and performance of CBT for huge scientific workloads (e.g., the data provenance of astronomy applications); and
(iii) Quantify the energy efficiency of large-scale blockchain deployment.
Some preliminary results on high-performance blockchains can be found at~\cite{aalmamun_sc19}.

\section*{Acknowledgement}

This work is in part supported by the U.S. Department of Energy under contract number DE-SC0020455.
This work is also supported by a Google Cloud award and an Amazon research award.
The authors are grateful for the valuable discussion with Mohammad Sadoghi (University of California, Davis) on an earlier version of this work.

\bibliographystyle{abbrv}
\bibliography{vldb}

\end{document}